\providecommand{\verifyx}{\pcalgostyle{Verify}}
\newcommand{\Hpub}{H_{\text{pub}}}
\theoremstyle{plain}
\newtheorem{Th}{Theorem}
\newtheorem{Prop}[Th]{Proposition}
\theoremstyle{definition}
\title{Security issues of CFS-like digital signature algorithms}
\author{Giuseppe D'Alconzo, Alessio Meneghetti, Paolo Piasenti}
\begin{document}

\maketitle

\begin{abstract}
    We analyse the security of some variants of the CFS code-based digital signature scheme. We show how the adoption of some code-based hash-functions to improve the efficiency of CFS leads to the ability of an attacker to produce a forgery compatible to the rightful user's public key.
\end{abstract}

\section{Introduction}

With the discovery and the increasingly closer advent of quantum computers, the most adopted signature schemes (e.g. DSA \cite{kerry2013fips}, ECDSA \cite{ecdsa}, EdDSA \cite{eddsa}, Schnorr \cite{schnorr}) are often considered not secure anymore because they are well known to be broken by Shor's algorithm \cite{shor1994algorithms}. 
Possible countermeasures are obtained by the exploitation of schemes whose security relies on NP-hard problems, or, more in general, on problems whose solutions are thought to be difficult in both the classic and quantum frameworks of computation. Among these alternatives, some of the most prominent are represented by lattice-based cryptography, multivariate polynomial cryptography, hash-based cryptography and interactive identification schemes. These branches of post-quantum cryptography are all present among the finalists of the NIST Post-Quantum Standardization process\footnote{NIST Post-Quantum Standardization process webpage: \url{https://csrc.nist.gov/Projects/post-quantum-cryptography/post-quantum-cryptography-standardization}, Accessed: 2021-12-01} (the interested reader can see the overview \cite{DBLP:journals/corr/abs-2107-11082}). Notably, code-based digital signature algorithms are not in this list. It is worth mentioning that Classic McEliece \cite{chou2020classic} is a code-based Key-Encapsulation Mechanism among the finalists of the competition regarding post-quantum key-agreement protocols, and about forty years of cryptanalysis have shown its resiliency and security (Classic McEliece is based on the works of McEliece \cite{mceliece1978public} and Niederreiter \cite{niederreiter1986knapsack}). Instead, the initial code-based digital signature schemes did not pass the first round of selection. 
Although since the earliest and historical proposal CFS \cite{courtois2001achieve} (which will be discussed later on) plenty of ideas and projects have followed, the issue of finding a viable candidate is still an open and tough problem. The key points here are two: of course security, but also efficiency. The main drawback of CFS is the signing time, in fact the message is hashed with a counter until the digest is a decodable syndrome.

Further signature schemes have been provided with KKS \cite{kabatianskii1997digital} and its variants \cite{barreto2011one, gaborit2012efficient}. This scheme converts the message in a decodable syndrome, using a different strategy with respect to CFS, but, taking into account the attack highlighted in \cite{cayrel2007kabatianskii}, all the variants have to be considered at best one-time signature schemes; additionally, strong caution has to be taken in the choice of parameters, as shown by \cite{otmani2011efficient} which broke all the parameters proposed in \cite{kabatianskii1997digital,kabatiansky2005error,barreto2011one}.

On the other hand, most CFS-like schemes persist to be unbroken, despite their slow speed in the signing process due to the attempt-based design. 
Some schemes try to reduce the signing time using the idea behind KKS: instead of searching for a decodable syndrome through the hash of the message, a map that aims to the space of decodable syndromes is used. An example is the mCFS$_c$ signature \cite{ren2017efficient}, that hashes the message into a decodable syndrome using a code-based hash function. Unfortunately, in this work we prove that this approach does not work, leaving room to an attacker to forge a valid signature without knowing the secret key.

This work is organized as follows: in the first section we present the notation and some basic notions from Coding Theory, then we introduce the two signature schemes CFS and mCFS. The second section presents the variant mCFS$_c$ and the concerning code-based hash function. We show an attack on this construction. The third and final section generalizes the strategy adopted in the mCFS$_c$ signature and shows that such approach leads to an attack.

\subsection{Notation}
With $\mathbb{F}_2$ we denote the field with $2$ elements and with $(\mathbb{F}_2)^n$ the vector space of dimension $n$ over $\mathbb{F}_2$. An $[n,k]$ \emph{binary code} $\mathcal{C}$ is a vector subspace of $(\mathbb{F}_2)^n$ of dimension $k$. The elements of $\mathcal{C}$ are called \emph{codewords}. Every $[n,k]$ binary code can be represented as the kernel of a $(n-k)\times n$ matrix $H$ called \emph{parity-check matrix}. The syndrome of a vector $v\in(\mathbb{F}_2)^n$ is given by $s=Hv^{\top}$ and the \emph{Hamming weight} of a vector is the number of its non-zero coordinates. With $\parallel$ we denote the concatenation of strings or vectors.

\subsection{Digital signatures and CFS}
A digital signature is a public-key cryptosystem consisting in three algorithms: $\kgen$ is the key generation algorithm that takes in input a security parameter $\lambda$ and outputs the pair of secret and public keys $(\sk, \pk)$, a signature algorithm $\sign$ that on input $\sk$ and a message $m$ outputs a signature $\sigma$ of $m$, and a verifier algorithm $\verifyx$ in which, given a public key $\pk$, a message $m$ and a signature $\sigma$, it verifies if the signature of the message is valid and is generated by the corresponding secret key $\sk$.\\
A digital signature algorithm must satisfy some security proprieties: authentication, non repudiation, integrity, non reusability and unforgeability. See \cite{kerry2013digital} for a more detailed study on digital signatures.

The CFS algorithm \cite{courtois2001achieve} consists in producing a signature exploiting the Niederreiter public-key cryptosystem \cite{niederreiter1986knapsack}. This scenario entails a substantial difference with respect to RSA, for instance: since trapdoor functions allow digital signatures taking advantage of the unique capability of public key owner to invert those functions, it is clear that only the messages whose hashes fall within the ciphertext space can be signed in this way. In our framework, we would like to deal with a linear code for which there exists an efficient decoding algorithm and for which the set of decodable syndromes (namely the ciphertext space) is as big as possible. In formal terms, given a $(n-k) \times n$ parity-check matrix $H$ of such a code, this translates into having a meaningful portion of vectors $s \in \mathbb{F}_{2}^{n-k}$ for which there exists a corresponding error pattern $e \in \mathbb{F}_{2}^{n}$ of Hamming weight less than the correcting capability of the code $t$, such that the syndrome of $e$ is $s$.
Since the fact that the union of the spheres centered in codewords and of radius $t$ covers the whole space $\mathbb{F}_{2}^{n}$ only happens in the case of perfect codes (which are banally unusable because of the overmuch leak of information they would disclose) the smartest play to make remains to repeatedly hash the message until one obtains a decodable syndrome. Binary Goppa codes \cite{berlekamp1973goppa} represent the best choice as underlying code for both their efficient decoding method (Patterson algorithm) and their steady resistance against all known attacks. This procedure is nothing more than a “hash-and-sign” routine, which inevitably requires several tries. Concretely, given suitable hash function $h$, one produces a sequence $d_0,\ldots, d_\iota$ of elements in $\mathbb{F}_{2}^{n-k}$ such that
$$d_0 = h(m\parallel 0),\; d_1 = h(m\parallel 1),\; \ldots, \; d_i = h(m\parallel i),\;\ldots\; d_\iota = h(m\parallel\iota) $$
where $\iota$ is the smallest integer such that $d_\iota$ is a decodable syndrome. The signature is then composed by the corresponding error pattern $e_\iota$ (that only the signer can compute) and by the counter $\iota$. The first straightforward question that arises is about how many attempts are needed in order to obtain a useful syndrome. The answer can be easily found by comparing the total number of syndromes to the number of (efficiently) correctable syndromes:
$$ \frac{\textsc{\# decodable syndromes}}{\textsc{\# total syndromes}} = \frac{\sum_{i = 0}^{t}\binom{n}{i}}{2^{n-k}} \simeq \frac{\binom{n}{t}}{2^{n-k}} \simeq \frac{\frac{n^{t}}{t!}}{n^t}=\frac{1}{t!} $$
which represent the probability of a syndrome to be decodable (here the relations among the parameters of a generic binary Goppa code have been used, i.e. $k = n - mt$ and $n=2^m$).\\
This scheme bases its security on two assumptions: the hardness of both the Syndrome Decoding Problem \cite{berlekamp1978inherent} and the Goppa Code Distinguisher Problem \cite{courtois2001achieve}.\\
Now we present the signature scheme of CFS.
\begin{itemize}
    \item $\kgen_{\text{CFS}}(1^\lambda)$: select $n,k,t$ according to the security parameter $\lambda$ then pick a random $[n,k]$ binary Goppa code $\mathcal{C}$ with correcting capacity $t$ and parity-check matrix $H$ and let $\mathcal{D}_H$ be an (efficient) syndrome decoding algorithm for $\mathcal{C}$. Pick a random $(n-k)\times(n-k)$ invertible matrix $S$ and a random $n\times n$ permutation matrix $P$ and set $\Hpub=SHP$. Chose a hash function $h$. Output $\pk=(h, t,\Hpub)$ as public key and $\sk=(S,H,P,\mathcal{D}_H)$ as secret key.
    
    \item $\sign_{\text{CFS}}(m, \sk)$: given the message $m$, compute $d_i=h(m\parallel i)$, starting from $i=0$ and increase it until $d_{\iota}$ is a decodable syndrome. Set $e=\mathcal{D}_H(S^{-1}d_{\iota})$ and output the signature $\sigma=(\iota,eP)$.
    
    \item $\verifyx_{\text{CFS}}(m, \sigma, \pk)$: let $\sigma=(\iota,u)$, verify that $u$ has Hamming weight less or equal than $t$, then compute $a = h(m||\iota)$ and $b = \Hpub u^{\top}$. The signature $\sigma$ is valid if and only if $a = b$.
\end{itemize}
We can see that the signature is correct, in fact
$$ a = h(m\parallel \iota ) = d_{\iota} = SH \cdot e^{\top} = SHPP^{-1}\cdot e^{\top} = \Hpub\cdot u^{\top} = b$$
where $S^{-1}d_{\iota} = H \cdot e^{\top}$ comes from the fact that $e$ has syndrome $S^{-1}d_{\iota}$.

In \cite{dallot2007towards} authors propose a modified version of the CFS signature called mCFS. Here the counter $i$ used in $\sign$ is replaced by a random nonce.

\section{The mCFS$_c$ signature}

In this section we describe and analyze the signature in \cite{ren2017efficient}, and in Proposition \ref{prop: attack mcfsc} we explicit an attack.

\subsection{Code Based Hash Function}
This signature is build on the protocol mCFS \cite{courtois2001achieve,dallot2007towards} using a particular code based hash function. This hash function is based on the work of \cite{augot2005family} and it follows the Merkel-Damgard design \cite{damgaard1989design}. Let $r$ be the length of the digest and let $s$ be an integer. The hash function is the iterative application of a \emph{compression function} $f:(\mathbb{F}_2)^s\to(\mathbb{F}_2)^r$, in fact, given a string $m$ proceed as following:
\begin{enumerate}
    \item consider $m$ padded such that its length is a multiple of $s$ and split $m$ in $ \lvert m\rvert/s$ blocks of length $s$;
    \item in the first round, combine the first block of $m$ with a fixed initial vector (IV) obtaining the state $L_1$ of length $s$ and compute $f(L_1)$;
    \item in the $i$-th round combine $f(L_{i-1})$ with the $i$-th block of $m$ obtaining the $i$-th state $L_i$ and apply $f$ to it;
    \item the output of the hash function is given by $f(L_{\lvert m \rvert/s})$.
\end{enumerate}
The hash function used in mCFS$_c$ uses the scheme above and the following compression function $f$. Let $r$ be the length of the digest. Given a $r\times n$ parity-check matrix $H$ of a $[n,n-r]$ binary code $\mathcal{C}$, let $w$ be an integer dividing $n$ and set $l=n/w$ and $s=w\log(l)$. Now we describe the compression function $f:(\mathbb{F}_2)^{s} \to (\mathbb{F}_2)^{r}$ based on $H$:
\begin{enumerate}
    \item let $H_1,\dots,H_{w}$ be $r\times w$ matrices such that $H = (H_1, \dots, H_{w})$;
    \item given $x\in(\mathbb{F}_2)^{s}$, split it in $w$ blocks of length $\log(l)$: $x=(x_1,\dots,x_n)$. We can see each $x_i$ as an integer between $0$ and $l-1$;
    \item set $f(x)$ as the sum of the $(x_i+1)$-th column of the matrix $H_i$, for $i=1,\dots,w$. In formulas, if $(H_i)_j$ is the $j$-th column of $H_i$, we have
    $$f(x) = \sum_{i=1}^w (H_i)_{x_i+1}.$$
\end{enumerate}
Observe that $f$ strongly depends on the choice of the parity-check matrix $H$.

For the signature mCFS$_c$, in \cite{ren2017efficient}, $H$ is chosen as the parity-check matrix of a $[n,n-r]$ binary Goppa code, and the parameter $w$ is less than the correcting capacity $t$ of the code. This yields to the hash function $h_H:\{0,1\}^* \to (\mathbb{F}_2)^{r}$ based on $H$. Observe that the computation of $h_H$ implies the knowledge of $H$.
\begin{Prop}\label{prop:hash}
    For every state $L_i$ of the hash function $h_H$, $f(L_i)$ is a syndrome of a vector of Hamming weight $w$.
\end{Prop}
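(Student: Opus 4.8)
The plan is to unwind the definition of the compression function and observe that $f$ always outputs a sum of $w$ columns of $H$, one taken from each of $w$ pairwise disjoint column-blocks; such a sum is exactly the syndrome of the indicator vector of those $w$ column positions.

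First I would fix the block decomposition $H=(H_1,\dots,H_w)$, so that $H_j$ consists of the columns of $H$ in positions $(j-1)l+1,\dots,jl$ (each $H_j$ being $r\times l$ — the ``$r\times w$'' in the description of the compression function should read $r\times l$, so that the index $(H_j)_{x_j+1}$ with $0\le x_j\le l-1$ is meaningful). For $x=(x_1,\dots,x_w)\in(\mathbb{F}_2)^s$ one then has $f(x)=\sum_{j=1}^w (H_j)_{x_j+1}$, and the $j$-th summand is precisely the column of $H$ in position $(j-1)l+x_j+1$. Since these $w$ positions lie in the pairwise disjoint intervals $\{(j-1)l+1,\dots,jl\}$, they are pairwise distinct.

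Next I would let $v\in(\mathbb{F}_2)^n$ be the vector whose support is exactly $\{(j-1)l+x_j+1 \;:\; j=1,\dots,w\}$. By the previous step this support has cardinality $w$, so $v$ has Hamming weight $w$, and by construction $Hv^\top=\sum_{j=1}^w (H_j)_{x_j+1}=f(x)$; that is, $f(x)$ is the syndrome of $v$. Applying this with $x=L_i$ — which is a legitimate input to $f$ because $L_i\in(\mathbb{F}_2)^s$ by the way the intermediate states are constructed in the iterated hash — gives the claim.

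I do not expect a genuine obstacle here: the argument is essentially a rewriting of the definition of $f$. The only point worth stating carefully is that the $w$ columns summed in $f$ are selected one per block from the $w$ disjoint column-blocks of $H$, which is exactly what forces the support of $v$ to have size precisely $w$ rather than merely at most $w$.
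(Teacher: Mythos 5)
Your proposal is correct and follows essentially the same route as the paper: both construct the weight-$w$ vector whose support picks position $x_j+1$ inside the $j$-th block of $l=n/w$ columns (i.e.\ position $(j-1)l+x_j+1$ of $H$), and observe that its syndrome is exactly $f(L_i)$. Your explicit remark that the $w$ selected positions lie in disjoint blocks (so the weight is exactly $w$) and your correction of the \qquote{$r\times w$} block size to $r\times l$ are points the paper leaves implicit, but the argument is the same.
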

\begin{proof}
    By construction, the state $x=L_i$ is splitted in $w$ integers $x_1,\dots,x_w$ between 0 and $l-1$. Let $c_i$ be the vector of length $n$ having support $(x_1+1)+0\cdot l, (x_2+1)+1\cdot l,\dots, (x_w+1)+(w-1)l$, it has Hamming weight $w$ and $f(L_i)$ is exactly $Hc_i^{\top}$, the syndrome of $c_i$.
\end{proof}

We can summarize the compression function as follows: let $n$ and $w$ be positive integers such that $w$ divides $n$ and set $s=w\log(n/w)$. Consider the bijection
\begin{equation*}
\begin{split}
    \mathrm{split}: (\mathbb{F}_2)^s & \to ((\mathbb{F}_2)^{\log(n/w)})^w\\
    (u_1,\dots,u_s) & \mapsto (z_1,\dots,z_w)
\end{split}
\end{equation*}
that splits a binary vector of length $s$ into $w$ vectors of length $\log(n/w)$. Now we can see every vector in $(\mathbb{F}_2)^{\log(n/w)}$ as an integer between $0$ and $n/w-1$. Define
\begin{equation}\label{eq_deltat}
\begin{split}
    \delta_t:(\mathbb{F}_2)^s & \to (\mathbb{F}_2)^n\\
    (u_1,\dots,u_s) & \mapsto (v_1,\dots,v_n)
\end{split}
\end{equation}
where $(v_1,\dots,v_n)$ is the vector of Hamming weight $w$ whose support is given by $(\mathrm{split}(x)_1+1)+0\cdot \frac{n}{w},(\mathrm{split}(x)_2+1)+1\cdot \frac{n}{w},\dots, (\mathrm{split}(x)_w+1)+(w-1)\frac{n}{w}$. Then the compression function $f$ can be written as $f(x)=H\cdot\delta_t(x)$.

\subsection{The signature scheme}
Since in \cite{ren2017efficient} it is not specified if the hash function is based on the secret matrix $H$ or the public matrix $\Hpub$, we first observe that since the hash function is part of the public key, this discloses the secret matrix $H$ and an attack can be performed confronting columns of $H$ and $\Hpub$, finding the permutation in quadratic time. Hence, assuming that the hash is based on the public matrix, the signature is given by the following algorithms. Let $\lambda$ be the security parameter.
\begin{itemize}
    \item $\kgen_{\text{mCFS}_c}(1^\lambda)$: select $n,k,t$ according to $\lambda$ then pick a random $[n,k]$ binary Goppa code $\mathcal{C}$ with correcting capacity $t$ and parity-check matrix $H$ and let $\mathcal{D}_H$ be an (efficient) syndrome decoding algorithm for $\mathcal{C}$. Pick a random $n\times n$ permutation matrix $P$ and set $\Hpub=HP$. Choose an integer $w$ less than $t$ and such that $w$ divides $n$ and construct the hash function $h_{\Hpub}:\{0,1\}^* \to (\mathbb{F}_2)^{n-k}$ based on $\Hpub$. Output $\pk=(h_{\Hpub},t,\Hpub)$ as public key and $\sk=(H,P,\mathcal{D}_H)$ as secret key.
    
    \item $\sign_{\text{mCFS}_c}(m, \sk)$: given the message $m$, pick a random $R$ in $\{1,\dots,2^{n-k}\}$ and compute $d=h_{\Hpub}(h_{\Hpub}(m)\parallel R)$ and set $e=\mathcal{D}_H(d)$. Output the signature $\sigma=(R,eP)$.
    
    \item $\verifyx_{\text{mCFS}_c}(m, \sigma, \pk)$: let $\sigma=(R,u)$. Verify that $u$ has Hamming weight less or equal than $t$, then compute $a = h_{\Hpub}(h_{\Hpub}(m)||R)$ and $b = \Hpub u^{\top}$. The signature $\sigma$ is if and only if $a = b$.
\end{itemize}
The signature scheme is correct using the same argument for CFS.

\begin{Prop}\label{prop: attack mcfsc}
Let $(\sk,\pk)$ be the output of $\kgen_{\text{mCFS}_c}(1^\lambda)$. An attacker knowing the public key $\pk$ can forge a signature compatible to the private key $\sk$ for any message $m$.
\end{Prop}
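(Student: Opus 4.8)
The key observation is that the verification equation only involves the public matrix $\Hpub$, and the hash function $h_{\Hpub}$ is itself built entirely from $\Hpub$: by Proposition \ref{prop:hash} and the summary at the end of the previous subsection, every application of the compression function produces a syndrome $\Hpub \cdot \delta_t(x)^\top$ of a vector $\delta_t(x)$ of Hamming weight exactly $w$. Since the verification accepts any $u$ of Hamming weight $\le t$ and $w < t$, a forger does not need to decode anything: they need only produce, for the challenge message $m$, the last-round state $L$ of $h_{\Hpub}$ and output $u = \delta_t(L)$. The signature $\sigma = (R, u)$ then satisfies $\Hpub u^\top = f(L) = h_{\Hpub}(h_{\Hpub}(m)\parallel R) = a$, and $u$ has weight $w \le t$, so verification passes.

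Concretely, the attack I would describe runs as follows. The attacker is given $\pk = (h_{\Hpub}, t, \Hpub)$ and a target message $m$. First, pick any $R \in \{1,\dots,2^{n-k}\}$ (exactly as the honest signer would). Second, compute $h_{\Hpub}(m)$, form the string $h_{\Hpub}(m)\parallel R$, and run the Merkle--Damg\aa rd iteration of $h_{\Hpub}$ on it \emph{but stop one step early}: retain the final state $L = L_{N}$ (where $N = \lvert h_{\Hpub}(m)\parallel R\rvert / s$ after padding), so that $f(L) = h_{\Hpub}(h_{\Hpub}(m)\parallel R)$. Third, set $u = \delta_t(L)$, the weight-$w$ vector whose support is read off from $L$ via $\mathrm{split}$ as in \eqref{eq_deltat}. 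Fourth, output $\sigma = (R, u)$. All of these steps use only public data and run in polynomial time.

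It remains to check that $\sigma$ is accepted by $\verifyx_{\text{mCFS}_c}(m,\sigma,\pk)$. The weight condition holds since $\mathrm{wt}(u) = w < t \le t$. For the syndrome condition, $b = \Hpub u^\top = \Hpub\, \delta_t(L)^\top = f(L)$ by the closed form $f(x) = H\cdot\delta_t(x)$ (with $H$ replaced by $\Hpub$ here), and $f(L) = h_{\Hpub}(h_{\Hpub}(m)\parallel R) = a$ by the definition of the hash as the application of $f$ to the final state. Hence $a = b$ and the forgery is valid. One should also remark that $\sigma$ is indistinguishable in form from an honest signature, and that the forgery works for \emph{every} message $m$, so the scheme is universally forgeable.

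I do not expect any serious obstacle: the only point requiring a little care is bookkeeping about the Merkle--Damg\aa rd padding and the indexing in $\delta_t$, i.e. making sure that the "state before the last $f$" is well-defined and that its image under $\delta_t$ has support within $\{1,\dots,n\}$ with no collisions (which is exactly the content of Proposition \ref{prop:hash}). A secondary remark worth including is \emph{why} the scheme fails structurally: the designers intended $h_{\Hpub}$ to land in the set of decodable syndromes so that the honest signer can invert it with $\mathcal{D}_H$, but in doing so they made the preimage publicly computable, destroying the one-wayness that the hash-and-sign paradigm of CFS relies on. The same idea, phrased abstractly, is what the final section of the paper generalizes.
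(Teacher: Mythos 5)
Your proposal is correct and follows exactly the paper's own argument: run the Merkle--Damg\aa rd iteration of $h_{\Hpub}$ on $h_{\Hpub}(m)\parallel R$ but stop before the final compression step, output $u=\delta_t(\bar{L})$ together with $R$, and note that $\Hpub u^{\top}=f(\bar{L})$ equals the digest while $\mathrm{w}(u)=w< t$. Your additional remarks on weight bookkeeping and on why the construction structurally breaks one-wayness are consistent with the paper's discussion and its generalisation in the final section.
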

\begin{proof}
Given a public key $\pk=(h_{\Hpub},t,\Hpub)$ and a message $m$, the attacker picks a random $R$ in $\{1,\dots,2^{n-k}\}$ and computes $d=h_{\Hpub}(h_{\Hpub}(m)\parallel R)$ but it stops before the last round of the outer hash function $h_{\Hpub}$, obtaining the round state $\bar{L}=L_{\lvert m \rvert/s}\in (\mathbb{F}_2)^s$ instead of the digest $d=f(L_{\lvert m \rvert/s})$. He set $u=\delta_t(\bar{L})$ and outputs as a signature for $m$ the tuple $\sigma=(R,u)$. Anyone can verify that this is a valid signature of $m$ compatible with the secret key $\sk=(H,P,\mathcal{D}_H)$, in fact we can compute $a = h_{\Hpub}(h_{\Hpub}(m)||R)$ and $b = \Hpub u^{\top}$ and observing that $a$ is equal to $b$ since multiplying $u=\delta_t(\bar{L})$ by $\Hpub$ is the last step of the hash function $h_{\Hpub}$. Therefore $\sigma$ is a valid signature.
\end{proof}

\subsection{A generalisation of mCFS$_c$}
We now slightly generalise mCFS$_c$ by considering a modification of $h_H$, proving that this new entire family of hash functions is vulnerable to the same attack we described for $h_H$ and therefore is not suitable for secure applications.
\\
Let $B_{n,t}$ be the set of vectors in $(\mathbb{F}_2)^n$ of Hamming weight less or equal than $t$.
Let $\gamma_t:\left(\mathbb{F}_2\right)^s\to\left(\mathbb{F}_2\right)^n$ be such that $\mathrm{Im}(\gamma_t)\subseteq B_{n,t}$, i.e. $\gamma_t$ is a function mapping bitstrings of length $s$ into bitstring of length $n$ with a Hamming weight bounded by $t$:
$$
\mathrm{w}(\gamma_t(v))\leq t\quad \forall v\in\left(\mathbb{F}_2\right)^s
$$
We denote with $\bar{h}_H:\{0,1\}^*\to\left(\mathbb{F}_2\right)^{n-k}$ the function mapping messages into syndromes associated to the parity-check matrix $H$ defined by the formula
\begin{equation}\label{eq_barh}
m\mapsto \bar{h}_{H}(m)=H\cdot \gamma_t(h(m))\;,
\end{equation}
where $h(\cdot)$ is any efficient function $\{0,1\}^*\to\left(\mathbb{F}_2\right)^{s}$. For simplicity of notation, we will call $h$ a hash function, even though we do not require here that $h$ satisfy any security property (even though it would be a good practice to choose a cryptographically-secure hash).

With this definition we can consider the following version of CFS, that we call $\widetilde{\text{CFS}}$:
\begin{itemize}
    \item $\kgen_{\widetilde{\text{CFS}}}(1^\lambda)$: randomly choose a code $\mathcal{C}$ to be used in the CFS algorithm (i.e. $\mathcal{C}$ is a code for which there exists an efficient decoder up to $t$ errors and whose randomly picked equivalent codes are indistinguishable from random) with parity-check matrix $H$ and efficient syndrome decoding algorithm $\mathcal{D}_H$. Then randomly choose an invertible $(n-k)\times (n-k)$ matrix $S$ and a permutation $n\times n$ matrix $P$, and define $\Hpub=SHP$. Choose an efficient map $\gamma_t$ and a hash $h$. Output $\sk=(S,H,P,\mathcal{D}_H)$ as the secret key and $\pk=(h,\gamma_t,t,\Hpub)$ as the public key.
    
    \item $\sign_{\widetilde{\text{CFS}}}(m, \sk)$: given the message $m$, compute $d=\bar{h}_{\Hpub}(m)$ according to \eqref{eq_barh}. Decode $S^{-1}d$ with the decoder for $H$ and thus obtaining an error vector $e=\mathcal{D}_H(S^{-1}d)$ of Hamming weight at most $t$ and then compute $\bar{e}=eP$. Output the signature $\sigma=\bar{e}$.
    
    \item $\verifyx_{\widetilde{\text{CFS}}}(m, \sigma, \pk)$: verify that $\sigma=\bar{e}$ has Hamming weight less or equal than $t$, then compute $a=\bar{h}_{\Hpub}(m)$ and $\bar{b}=\Hpub\bar{e}$. The signature is valid if $a=b$.
\end{itemize}
The correctness of the above signature scheme is straightforward and follows directly from the correctness of CFS.

We remark that mCFS$_c$ is (basically) obtained by adopting the algorithm above where:
\begin{itemize}
\item[-] $\mathcal{C}$ is a binary irreducible Goppa code;
\item[-] $S=I_{n-k}$ is the identity matrix of order $n-k$;
\item[-] $\gamma_t$ is the map $\delta_t$ defined in \eqref{eq_deltat};
\item[-] $h$ is the code-based hash function $h_{\Hpub}$ stopped before the last application of $\delta_t$ and multiplication by $H_{\mathrm{pub}}$, which we denote momentarily $h_{\Hpub}^{\mathrm{stopped}}$;
\end{itemize}
Indeed, with these choices we have $\bar{h}_{\Hpub}(m)=H_{\mathrm{pub}}\cdot \delta_t(h_{\Hpub}^{\mathrm{(stopped)}}(m))=h_{\Hpub}(m)$. We also remark that in mCFS$_c$ there are other marginal differences with respect to our generalisation, which however do not impact on the main points of the scheme that we sketched above (e.g. computing $h_{\Hpub}(h_{\Hpub}(m)||R)$ instead of $h_{\Hpub}(m)$).

\begin{Th}
Let $(\sk,\pk)$ be the output of $\kgen_{\widetilde{\text{CFS}}}(1^\lambda)$. An attacker knowing the public key $\pk$ can forge a signature compatible to the private key $\sk$ for any message $m$.
\end{Th}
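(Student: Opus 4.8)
The plan is to mimic exactly the forgery from Proposition~\ref{prop: attack mcfsc}, observing that the structure $\bar h_H(m) = H\cdot\gamma_t(h(m))$ already exhibits a valid signature ``for free'' in the last multiplication by $H$. Concretely: given $\pk=(h,\gamma_t,t,\Hpub)$ and any message $m$, the attacker computes $v = \gamma_t(h(m)) \in (\mathbb{F}_2)^n$, then applies the permutation to obtain $\bar e = vP$, and outputs $\sigma = \bar e$. The whole point is that $h$ and $\gamma_t$ are part of the public key, so every step here is efficiently computable without the secret key.

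The verification of correctness of this forgery should then be a short direct computation. First, $\bar e = vP$ has the same Hamming weight as $v$, and $\mathrm{w}(v)=\mathrm{w}(\gamma_t(h(m)))\le t$ by the defining property of $\gamma_t$, so the weight check in $\verifyx_{\widetilde{\text{CFS}}}$ passes. Second, the verifier computes $a = \bar h_{\Hpub}(m) = \Hpub\cdot\gamma_t(h(m)) = \Hpub\cdot v^\top$ and $b = \Hpub\bar e^\top = \Hpub (vP)^\top = \Hpub P^\top v^\top$. Here one must be slightly careful with the transpose/permutation bookkeeping exactly as in the CFS correctness computation in the excerpt (where $eP$ is paired with $\Hpub = SHP$ via $PP^{-1}=I$ and $P^{-1}=P^\top$ for a permutation matrix), so that $b = \Hpub\bar e^\top$ unwinds to $\Hpub v^\top = a$. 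Thus $a=b$ and $\sigma$ is accepted.

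Finally I would note that $\sigma$ is genuinely ``compatible with $\sk$'' in the sense required: the forged $\bar e$ is precisely a vector of weight $\le t$ whose $\Hpub$-syndrome equals $\bar h_{\Hpub}(m)$, which is exactly what an honest signature produced by $\sign_{\widetilde{\text{CFS}}}$ would be — the honest signer finds such an $\bar e$ by decoding $S^{-1}d$, whereas the attacker reads it off directly from the hash computation. One sentence on why the attack defeats the design goal: the map $\gamma_t$ was introduced to avoid the costly attempt-based hashing of CFS, but in doing so it makes the final syndrome a \emph{public} linear image of a \emph{publicly computable} low-weight preimage, collapsing the trapdoor.

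There is essentially no hard step here: the construction $\bar h_H$ is self-defeating by design, and the only mild subtlety is getting the permutation-and-transpose conventions to line up with the CFS correctness argument quoted in the excerpt. I would simply reuse that same bookkeeping verbatim.
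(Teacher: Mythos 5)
Your core idea is the right one and matches the paper's: the structure $\bar h_{\Hpub}(m)=\Hpub\cdot\gamma_t(h(m))$ hands the attacker a publicly computable low-weight preimage of the target syndrome, so the trapdoor collapses. However, the specific forgery you output is wrong. You set $\bar e = vP$ with $v=\gamma_t(h(m))$, but $P$ belongs to the \emph{secret} key $\sk=(S,H,P,\mathcal{D}_H)$ — the public key is only $(h,\gamma_t,t,\Hpub)$ — so the attacker cannot compute $vP$ at all. Moreover, even granting knowledge of $P$, the verification equation would fail: the verifier checks $\Hpub\bar e^\top = a$ with $a=\Hpub v^\top$, and with your choice $\Hpub(vP)^\top=SHPP^\top v^\top=SHv^\top$, which does not equal $\Hpub v^\top=SHPv^\top$ unless $Pv^\top=v^\top$. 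The permutation cancellation you import from the CFS correctness computation works there precisely because the honest error $e$ is obtained by decoding with respect to the \emph{secret} matrix $H$, so multiplying by $P$ converts it into an error compatible with $\Hpub=SHP$. Your $v$ is already, by construction, a weight-$\le t$ vector whose syndrome \emph{with respect to $\Hpub$} equals $\bar h_{\Hpub}(m)$; no conversion is needed and applying $P$ destroys the identity.

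The fix is simply to delete the permutation step: output $\sigma=\gamma_t(h(m))$ itself, exactly as the paper does. Then the weight bound holds by the defining property of $\gamma_t$, and $\Hpub\sigma^\top=\Hpub\cdot\gamma_t(h(m))=\bar h_{\Hpub}(m)=a$ is immediate, with every step computable from $\pk$ alone. With that correction your argument coincides with the paper's proof.
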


\begin{proof}
An attacker $\adv$ knowing the public parameters $(h,\gamma_t,t,\Hpub)$ is able to forge any signature. Instead of performing the steps of the signature algorithm, $\adv$ performs the following:
\begin{enumerate}
\item Given any message $m$, compute $x=\gamma_t(h(m))$;
\item output $x$ as the signature of $m$.
\end{enumerate}
Indeed, $x$ is a valid error vector in $\left(\mathbb{F}_2\right)^n$ of Hamming weight at most $t$ (by definition of $\gamma_t$) whose syndrome with respect to the parity-check matrix $\Hpub$ is $s=\bar{h}_{\Hpub}(m)$. Therefore, any verifier obtains
$$
\Hpub x^{\top}=\Hpub x^\top = \Hpub \cdot \gamma_t(h(m)) = \bar{h}_{\Hpub}(m)\;
$$
and the signature results valid.
\end{proof}
We remark how an attacker does not need to know the private key, and the number of operations performed by $\adv$ to successfully obtain a forgery are less than the number of operations performed by a honest user to obtain a valid signature. The key-point of the vulnerability of the scheme is that, to obtain a decodable syndrome, we force the application of a function $\gamma_t$ to the output of the hash function before computing the syndrome. Even though this step allows us to obtain a decodable syndrome without having to rely to the (expensive) re-iteration of the signature steps of the original CFS protocol, during the signature algorithm we are forced to explicitly determine a decodable error compatible with the output syndrome.

\section{Conclusions}
One of the practical issues of the CFS signature scheme is the computational effort required to obtain a decodable syndrome from the hash of the message. In \cite{ren2017efficient} the authors attempt to overcome this problem using a Merkel-Damgard-style code-based hash function from the space of binary strings into the set of decodable syndromes, significantly reducing the cost of signing. This approach has proven unsuccessful, since the protocol allows to an attacker who does not know the private key to produce a valid signature.

We showed that a generalization of this approach remains insecure: a hash function that sends arbitrarily long binary strings into the set of decodable syndromes can be constructed and yet there exists an attack on this new variation of the CFS signature. Therefore, other solutions should be found, in order to preserve the original security of the scheme but also to reduce the computational effort used in the signing process. The design of a suitable code-based signature should keep in mind both the provable security of CFS-like signatures and the efficiency of the KKS scheme.

\subsubsection*{Acknowledgements}
The first author acknowledges support from TIM S.p.A. through the PhD scholarship. The second author is a member of the INdAM Research group GNSAGA and of the Cryptography and Coding group of the Unione Matematica Italiana (UMI). The core of this work is contained in the third author's M.Sc. thesis.

\bibliographystyle{plain}
\bibliography{main}
\end{document}